\newtheorem{theorem}{Theorem}[section]
\newtheorem{lemma}[theorem]{Lemma}
\DeclareMathOperator{\E}{E}
\title{\Large Semi-Mechanistic Bayesian modeling of COVID-19 with Renewal Processes}
\author[1]{Samir Bhatt}
\author[1]{Neil Ferguson}
\author[2]{Seth Flaxman}
\author[2,{\textdagger}]{Axel Gandy}
\author[1]{Swapnil Mishra}
\author[2]{James A. Scott}
\affil[1]{\footnotesize MRC Centre for Global Infectious Disease Analysis, Imperial College, London, UK}
\affil[2]{\footnotesize Department of Mathematics, Imperial College, London, UK}
\affil[ ]{\footnotesize {\textdagger} \mbox{Corresponding author: a.gandy@imperial.ac.uk}}
\date{}
\begin{document}

\maketitle
\begin{abstract}
We propose a general Bayesian approach to modeling epidemics such as COVID-19. The approach grew out of specific analyses conducted during the pandemic, in particular an analysis concerning the effects of non-pharmaceutical interventions (NPIs) in reducing COVID-19 transmission in 11 European countries. The model parameterizes the  time varying reproduction  number $R_t$ through a regression framework in which covariates can e.g.\  be governmental interventions or changes in mobility patterns. This allows a joint fit across regions and partial pooling to share strength. This innovation was critical to our timely estimates of the impact of lockdown and other NPIs in the European epidemics, whose validity was borne out by the subsequent course of the epidemic. Our framework provides a fully generative model for latent infections and observations deriving from them, including deaths, cases, hospitalizations, ICU admissions and seroprevalence surveys. One issue surrounding our model's use during the COVID-19 pandemic is the confounded nature of NPIs and mobility. We use our framework to explore this issue. We have open sourced an R package epidemia implementing our approach in  Stan. Versions of the model are used by New York State, Tennessee and Scotland to estimate the current situation and make policy decisions.  
\end{abstract}

\section{Introduction} \label{sec:intro}

This article presents a general framework for semi-mechanistic Bayesian modeling of infectious diseases using renewal processes. The term semi-mechanistic relates to statistical estimation within some constrained mechanism. Variants of this general model have been used in specific analyses of Covid-19 \citep{Flaxman2020-cr, Flaxman2020, Vollmer2020-oj,Mellan2020-hi,Unwin2020-nc, NYS, Olney2020, Scotland, Mishra2020.11.24.20236661}, and continue to be used in ongoing work to make policy decisions. 
The present article motivates and discusses the key statistical and  epidemiological features of this framework, starting from a counting process setup. Various extensions of the basic model are considered, including a latent infection process. We discuss limitations and applications of the modelling framework to stimulate further research.

The model uses a flexible regression-based framework for parameterising transmission and ascertainment rates. This allows the fitting of multilevel models \citep{gelman_2006, hox_2010, kreft_2011} for several regions simultaneously. Such partial pooling of parameters has specific advantages in the context of infectious diseases. Suppose we wish to estimate the effect of NPIs \citep{Cowling2020, Flaxman2020} or mobility \citep{badr_2020, miller_2020} on transmission rates. Estimating these effects separately in different regions could lead to noisy estimates for at least two reasons. There is typically little high quality data at the early stages of an epidemic. Such data is generally correlated, reducing the information content that can be used to infer such an effect. In addition, NPIs often occur in quick succession and their effects are confounded \citep{hksar_2003, who_2003}. This is exacerbated by the random times between infections (the generation distribution) and between infections and observations, which smooths the observed data, making it more difficult to attribute changes in transmission rates to a particular NPI. Alternatively, one could pool the effect across all groups. This ignores group-level variations and can lead to poor predictive performance, in particular underestimating variance for previously unmodeled regions. One could augment such a model with group-level indicators, but this results in a large number of parameters, which are difficult to estimate and leads to overfitting with classical estimation techniques. Partial pooling provides a natural solution to this.

Sometimes the inferential goal is not to assess the effect of a covariate on outcomes, but rather to infer transmission rates over time. Previous studies have focused on estimating reproduction numbers from case data \citep{ferguson_2001, riley_2003, bettencourt_2008,fraser_2009, kelly_2010, Cori2013}, sometimes directly substituting observed case counts for the unknown number of infected individuals \citep{wallinga_2004}. However, the emergence of SARS-CoV-2 has highlighted shortcomings of methods that rely on just case data. Limited testing capacity at the early stages of the pandemic led to only a small proportion of infections being detected and reported \citep{Li_2020}. Those tested were typically more likely to have been hospitalised or were at higher risk of infection or death. In particular this proportion, referred to as the infection ascertainment rate (IAR) is country-specific and likely to have changed over time due to changes in testing policies and capacity. If unaccounted for, it will lead to biases in the inferred transmission rates.

This highlights the need for more flexible observational models, whereby more varied types of data can be incorporated, and their idiosyncrasies accounted for. Daily death data has been used in \citet{Flaxman2020} to recover reproduction numbers in the early stages of the SARS-CoV-2 pandemic, and has been seen as more reliable than case data. However, there have been clear variations in definitions and reporting across time and countries. It is therefore important to appropriately model noise within the the observational models. Our framework allows for multiple types of data including deaths, cases, hospitalizations, ICU admissions and the results of seroprevalence surveys. This improves robustness of inferred parameters to biases in any one type of data. 

The model uses discrete renewal processes to propagate infections within modeled populations. These have been used in a number of previous studies \citep{fraser_2007, Cori2013, nouvellet_2018, cauchemez_2008}, and are linked to other popular approaches to infectious disease modeling. \citet{champredon_2018} show that the renewal equation leads to identical dynamics as Erlang-Distributed Susceptible-Exposed-Infected-Recovered (SEIR) compartmental models, when a particular form is used for the generation distribution. A special case of this is the standard Susceptible-Infected-Recovered (SIR) model \citep{kermack_1927}. The approach is also connected to counting processes including the Hawkes process and the Bellman-Harris process \citep{bellman_1948, bellman_1952}. \citet{bellman_1948, Mishra2020-yt} derive the renewal equation as the expectation of an age-dependent branching process. Age-dependence allows for more realistic dynamics than age-insensitive processes, like the Galton-Watson process \citep{bartoszynski_1967, getz_2006}. More complex branching processes such as the Crump-Mode-Jagers branching process could also be considered. Hawkes processes are also related to renewal processes, with the expectation of the Hawkes intensity function resulting in the renewal equation \citep{Rizoiu_2017}.

We describe the general model in detail, and start by considering the bare-bones version in Section \ref{sec:overview}. The motivation for the model lies in continuous-time counting processes, and this connection is discussed in Section \ref{sec:motivation}. Sections \ref{sec:infections} and \ref{sec:observations} present the infection and observation processes in more detail, and consider important extensions of the basic model. Section \ref{sec:multilevel} considers how to use the framework for multilevel modeling. Section \ref{sec:reflections} compares our approach to standard time series models, and outlines the key challenges involved in modeling with our framework. Section \ref{sec:examples} considers the specific aspect of confounding and causality when estimating the effects of variables on transmission rates. Section \ref{sec:discussion} has a brief discussion.

\section{Model Overview}
\label{sec:overview}

We now formulate a basic version of the model for one homogeneous population. The same model can be used for multiple regions or groups jointly. Let $R_t > 0$ be the reproduction number at time $t>0$. This determines the rate at which infections grow. Infections $i_{v}, \ldots, i_0$ for some $v \leq 0$ are given a prior distribution. For $t > 0$, we let new infections $i_t$ be defined by
\begin{align}
i_t &= R_t \sum_{s < t} i_s g_{t-s}, \label{eq:renewal}
\end{align}
where the generation time,  the lag between infections, is given through a probability mass function $g$, i.e.\  $g_t\geq 0$ and $\sum_{t=1}^\infty g_t=1$.

Observations occur at certain times $t > 0$ . In general, there may be multiple types; case and death counts for example. Each such type is driven by its own time-varying ascertainment rate $\alpha_t > 0$. The mean of the observations at time $t$ is linked to past infections by 
\begin{align}
y_t&=\alpha_t\sum_{s \leq t} i_s\pi_{t-s}, \label{eq:yrecursion}
\end{align}
where $\pi$ is a distribution for the lag between an infection and when it gives rise to an observation. The sampling distribution of the observations with these means is typically nonegative and discrete, and may depend on auxiliary parameters. When multiple types are observed, we can superscript the quantities as $y^{(l)}_t, \alpha^{(l)}_t$ and $\pi^{(l)}$ and assign independent sampling distributions for each type.

Transmission rates $R_t$ and ascertainment rates $\alpha_t$ can be modeled flexibly using Bayesian regression models, and through sharing of parameters, are the means through which we tie together multiple regions or groups using multilevel modeling. One can, for example, model transmission rates as depending on a binary covariate for an NPI, say full lockdown. The coefficient for this can be \textit{partially pooled} between these groups. The effect is to share information between groups, while still permitting between group variation.

\section{Motivation from continuous time}
\label{sec:motivation}
Our model can be motivated from a continuous time perspective as follows. Infections give rise to additional infections in the future, referred to as offspring.  Letting $N^I(t)$ denote the number of infections occurring up to time $t$, defined by its intensity
\begin{equation} \label{eq:intensity}
    \lambda(t) = R(t) \int_{s < t} g(t-s)N^I(ds), \quad t>0,
\end{equation}
where $g$ is the density of a probability distribution on  $\mathbb{R}^+$ defining the time between infections, and
where  $\{R(t) : t  > 0\}$ is a non-negative stochastic process. 
The process can be initialised by assuming values for $N^I(t)$ for $t$ in the seeding period $[v,0]$. 

Equation \eqref{eq:intensity} is similar to the Hawkes intensity, however the \textit{memory kernel} $g$ is scaled by a time-specific factor $R(t)$. The integrand $g$ allows the intensity to increase due to previous infection events, while $R(t)$ tempers the intensity for other time-specific considerations. 
If $R(t') = R(t)$ for all $t'$ then Equation \eqref{eq:intensity} reduces to a Hawkes process. Under this assumption, since $g$ integrates to unity, the expected number of offspring is simply $R(t)$, and so this is the \textit{instantaneous reproduction number} or alternatively the \textit{branching factor} of the Hawkes process. The generation time, defined as the time from an infection to a secondary infection, is distributed according to $g$ and so $g$ is the \textit{generation distribution}.

Observations are precipitated by past infections; a given infection may lead to observation events in the future. Letting $N^Y(t)$ be the count of some observation type over time defined by the intensity
\begin{equation} \label{eq:yintensity}
    \lambda_y(t) = \alpha(t) \int_{s<t} \pi(t-s) N^I(ds),
\end{equation}
for $t>0$, where $\pi : \mathbb{R}^+ \to \mathbb{R}^+$ is a function and  $\{\alpha(t) : t \geq 0 \}$ is a non-negative stochastic process. This is similar to Equation \eqref{eq:intensity}, however the intensity increases due to past infections, rather than past observations. 

Consider the special case where $\pi$ is a probability density and where $\alpha(t') = \alpha(t)$ for all $t'$. The average number of observation events attributable to a single infection is then $\alpha(t)$, and so this is an \textit{instantaneous ascertainment rate}. $\pi$ is then interpreted as the distribution for the time from an infection to an observation, and therefore we call it the \textit{infection to observation} distribution.

\section{Infection Process} 
\label{sec:infections}

Starting from the continuous model, we now describe a discrete model, which results in the formulation of  Section \ref{sec:overview}. This discrete model is more amenable to inference. Let $ I_t$ be the number of new infections at time $t$; this is the equivalent of  $ N^I_t - N^I_{t-1}$ in the continuous model. As basic modelling block we use the following discrete version of \eqref{eq:intensity}:
\begin{equation} \label{eq:condexp}
\E[I_{t} \vert  R_{1:t}, I_{v:t-1}] = R_t L_t,
\end{equation}
where $L_t := \sum_{s < t} I_s g_{t-s}$ is the \textit{case load} or \textit{total infectiousness} by time $t > 0$.
Moreover,  letting $i_t:=\E[I_{t} \vert R_{1:t}, I_{v:0}]$ and taking the conditional expectation given reproduction numbers $R_{1:t}$ and seeded infections $I_{v:0}$ on both sides of \eqref{eq:condexp} gives
\begin{equation*}
i_t= R_t \E[L_t \vert R_{1:t}, I_{v:0}] = R_t\sum_{s<t} \E[I_{s} \vert R_{1:s}, I_{v:0}] g_{t-s}=R_t \sum_{s<t} i_s g_{t-s},
\end{equation*}
which is Equation \eqref{eq:renewal}. This is a discrete renewal equation, which can alternatively be interpreted as an $\text{AR}(t)$-process with known coefficients $g_{k}$. From this point of view, the basic model in Section \ref{sec:overview} uses $i_t$ as synonymous with actual infections. Since infections are simply a deterministic function of other parameters, there is no need to treat them as unknown latent parameters to sample. This can lead to lower sampling times and faster convergence.

\subsection{Modeling Latent Infections}
\label{sec:latentinf}

The model of Section \ref{sec:overview} can be extended by replacing each $i_t$ with the actual infections from the counting process $I_t$, and then assigning a prior to $I_t$. Although sampling can be slower, this has certain advantages. 
When past infection counts are low, significant variance in the offspring distribution can imply that new infections $I_t$ has high variance. This is not explicitly accounted for in the basic model. In addition, this approach cleanly separates infections and observations; the latter being modeled \textit{conditional} on actual infections. The sampling distribution can then focus on idiosyncrasies relating to the observation process.

We assign a prior to $I_t$ conditional on previous infections and current transmission $R_t$. The expected value for this is given by Equation \eqref{eq:condexp}. Appendix \ref{sec:offspring} shows that assuming the variance of the prior to be a constant proportion $d$ of this mean is equivalent to letting $d$ be the \textit{coefficient of dispersion} for the offspring distribution. $d > 1$ implies overdispersion, and can be used to account for super-spreading events, which has been shown to be an important aspect for modeling Covid-19. The parameter $d$ can be assigned a prior.

Any two parameter family can be used to match these first two moments. Letting this be continuous rather than discrete allows inference to proceed using Hamiltonian Monte Carlo, whereby new values for $I_t$ are proposed simultaneously with all other parameters. Possible candidates include log-normal, gamma and the Weibull distributions. If an explicit distribution for the offspring distribution is desired, one can show that assuming a Gamma distribution with rate $\lambda$ for this results in a Gamma distribution for $I_t$ with rate $\lambda$. The coefficient of dispersion is then simply $D = \lambda^{-1}$.

\subsection{Population Adjustments}
\label{sec:popadjust}

If $R_t$ remains above unity over time, infections grow exponentially without limit. In practice, infections should be bounded from above by $S_0$, the initial susceptible population. All else being equal, transmission rates are expected to fall as the susceptible population is diminished. 

Consider first the model using $I_t$, which was described in Section \ref{sec:latentinf}. Equation \ref{eq:condexp} can be replaced with
\begin{equation} \label{eq:condexprevised}
\E[I_{t} \vert R_{1:t}, I_{v:t-1}] = (S_0 - I_{t-1}) \left(1 - \exp\left(-\frac{R_{u,t}L_t}{S_0}\right)\right),
\end{equation}
where $R_{u,t}$ is an \textit{unadjusted} reproduction number, which does not account for the susceptible population.
This satisfies intuitive properties. As the \textit{unadjusted} expected infections $R_{u,t}L_t$ approaches infinity, the \textit{adjusted} expected value approaches the remaining susceptible population. The motivation for and derivation of Equation \eqref{eq:condexprevised} is provided in Appendix \ref{sec:popadjustapp}. In short, this is the solution to a continuous time model whose intensity is a simplification of Equation \eqref{eq:intensity}. We must also ensure that the distribution of $I_t$ cannot put positive mass above $S_0 - I_{t-1}$. A simple solution is to use truncated distributions. Of course, this adjusts the mean value from Equation \eqref{eq:condexprevised}, however this is unlikely to be significant unless the susceptible population is close to depletion.

In the basic model, one can apply the adjustment to $i_t$ by replacing $L_t$ in Equation \eqref{eq:condexprevised} with
\begin{equation}
    \E(L_t \vert R_{1:t}, I_{v:0}) = \sum_{s<t} i_s g_{t-s}.
\end{equation}

\section{Observations}
\label{sec:observations}

Observations are modeled in discrete time, analogous to how we treated infections in Section \ref{sec:infections}. Letting $\pi : \mathbb{N} \to \mathbb{R}^+$ and $Y_t := N^Y_t - N^Y_{t-1}$, the discrete analogue to Equation \eqref{eq:yintensity} is
\begin{equation} \label{eq:obscondmean}
    \E[Y_t \vert \alpha_t, I_{v:t}] = \alpha_t \sum_{s\leq t} I_s \pi_{t-s}.
\end{equation}
Taking the expected value of the above given seeded infections, transmission rates and the current ascertainment rate gives
\begin{equation} \label{eq:obscondmean2}
    \E[Y_t \vert \alpha_t, R_{1:t}, I_{v:0}] = \alpha_t \sum_{s\leq t} i_s \pi_{t-s},
\end{equation}
which is recognisable as Equation \eqref{eq:yrecursion}. Thus we have two possible expressions for the mean of $Y_t$, one given actual infections, and the other given expected infections $i_t$. The basic model of Section \ref{sec:overview} uses the latter, while the extension in Section \ref{sec:latentinf} uses the former.

We assume that $Y_t \sim \mathcal{F} \left(y_t, \phi \right)$, where $\mathcal{F}$ is a non-negative discrete family parameterised by its mean $y_t$ and potentially an auxiliary parameter $\phi$. This could be a Poisson distribution, where there is no auxiliary parameter. Using a quasi-Poisson or negative binomial instead allows for overdispersion. This can be useful to capture, for example, day-to-day variation in ascertainment rates when infection counts are low.  The mean $y_t$ can be taken to be either \eqref{eq:obscondmean} or \eqref{eq:obscondmean2}, the latter being used in the basic version of the model. Hidden in this formulation is the assumption that each $Y_t$ is conditionally independent given $y_t$. Using multiple observation series $Y^{(l)}_t$ can help to improve the model inferences and identifiability of certain parameters. We simply assume that each such series is conditionally independent given the underlying infection process.

\section{Multilevel Models}
\label{sec:multilevel}

Transmission rates can be modeled quite generally within the framework. If the aim is simply to estimate transmission in a single region over time, one approach could be to let $R_t = g^{-1}(\gamma_t)$, where $g$ is a link function and $\gamma_t$ is some autocorrelation process, for example a random walk. Suppose, however, that transmission is modeled in $M$ regions and the goal is to estimate the effect of a series of NPIs on transmission. Letting $R^{(m)}_{t}$ denote transmission in region $m$ at time $t$, we could let
\begin{equation}
    R^{(m)}_t = g^{-1}\left(  \beta^{(m)}_0 + \sum_{l=1}^p x^{(m)}_{t}\beta^{(m)}_{k}\right),
\end{equation}
where $x^{(m)}_{t}$ are binary encodings of NPIs, and  $\beta^{(m)}_0$ and $\beta_k^{(m)}$ are region-specific intercepts and effects respectively. The intercepts are used to allow regions to have their own baseline transmission rates. Collecting these group specific parameters into $\beta^{(m)}$, we can partially pool them by letting $\beta^{(m)} \sim \mathcal{N}\left(0, \Sigma \right)$, for each group $m$, and the assigning a prior to the covariance matrix $\Sigma$. This could be an inverse-Wishart prior, or alternatively, $\Sigma$ can be decomposed into variances and a correlation matrix, which are each given separate priors \citep{tokuda2011visualizing}.

One possible option for $g$ is is the log-link. This provides easily interpretable effect sizes; a one unit change in a covariate multiplies transmission by a constant factor. However, this can lead to prior mass on unreasonably high transmission rates. With this in mind, an alternative is to use a generalisation of the logit link for which
\begin{equation}
g^{-1}(x) = \frac{K}{1 + e^{-x}},
\end{equation}
and where $K$ is the maximum possible value for transmission rates. This serves a similar purpose to the carrying capacity in a logistic growth model.

The ascertainment rate $\alpha_t$ can also be modeled with similar considerations to the above. This flexibility is useful, particularly because these quantities are likely to change as an epidemic progresses. This has been clearly seen during the Covid-19 epidemic, where the infection ascertainment rate may have increased over time due to increased testing capacity and improved track and trace systems. Multilevel models can in theory also be specified through $\alpha_t$.

\section{Forecasting, epidemiological constants, and seeding}
\label{sec:reflections}

A key benefit of using a semi-mechanistic approach is that forecasts are constrained by plausible epidemiological mechanisms. For example, in the absence of any further interventions or behavioural changes, and looking at a medium term forecast of just incidence (daily new cases/infections), a traditional time series forecasting approach may predict a constant function based on observing broadly constant incidence, but semi-mechanistic approach would expect a monotonic decrease based on a constant rate of transmission and the effect of herd immunity. The performance of epidemiologically constrained models is generally good \citep{Carias_2019}; this is perhaps not surprising as examining the discrete renewal equation shows that these models correspond to autoregressive(n) filters with a convex combination of coefficients specified by the generation interval. However, similar to financial forecasting, the predictive capability of epidemic models are likely to be better interpreted as scenarios rather than actual predictions due to the rapidly adaptive landscape of policy.

A second benefit of epidemic models is to provide a plausible mechanism to explain (non causally) the changes observed in noisy data. For example, in estimating the effect of an intervention on observed death data, we need consider what that intervention affects i.e the rate of transmission or $R_t$. As we have described above, we can connect the rate of transmission to latent infections to deaths such that we have an epidemiologically motivated mechanism. While we can statistically estimate parameters for how the intervention affects $R_t$, certain important parameters will be entirely unidentifiable and need to be fixed as constants or with very tight priors. For example, to reliably estimate the number of infections, an infection fatality rate needs to be chosen. A failure to choose an appropriate infection fatality rate can result in a bimodal posterior where changes can either be attributed to herd immunity or to interventions. From a statistical perspective, is it difficult to disentanlge which mode of the posterior best represents reality, and hence it is sensible to first estimate a plausible infection fatality rate and then use this within the semi-mechanistic model. A second example is the onset of symptoms-to-death distribution. Given the lag between transmission, infection and deaths, the effect of an intervention is dependent on the onset of symptoms-to-death distribution. 

Infection seeding is a fundamentally challenging aspect of epidemic modelling. Estimating the initial effect of seeding is crucial to understanding a baseline rate of growth ($R_0$) from which behaviours and interventions can modify. This seeding is heavily confounded by importation and under ascertainment. Both these factors can influence estimates of the initial growth rates, and this in turn can affect the impact of changes in transmission as time progresses. We have proposed heuristic approaches to mitigate issues with early seeing, but principled statistical approaches need to be developed. In particular, Bayesian pair plots show strong correlation between seeding parameters and $R_0$, which can potentially lead again to a bimodal posterior where initial growth dynamics can be explained through $R_0$ or via initial seed infections.

\section{Confounding and Causality: Estimating the Effect of Interventions}
\label{sec:examples}

Section \ref{sec:multilevel} showed that changes in transmission rates over time can be explained by parameterising these rates in terms of predictor variables, such as NPIs and mobility. Clarifying the effects of interventions on transmission is important, if only because of their clear economic and human consequences. This is however a significant challenge because the effects are potentially confounded with unobserved behavioural changes, and they are hard to identify. Identification is difficult because interventions are highly correlated when they occur in quick succession. Moreover, the random time between an infection and its recording as a case or death leads to observations being less informative about the effect of any particular intervention.

\citet{Flaxman2020} estimated the effectiveness of NPIs across 11 European countries, and used partial pooling of effect sizes to address the identification problem. At that time, little data existed other than information on deaths and the timing of interventions. NPIs, which were coded as a binary set of mandatory government measures (e.g.~school closures, ban on public events, lockdown), could not fully explain the patterns seen in some countries (e.g.~Sweden), and especially at the subnational level.  Mobility data became available in April and was used to model the epidemic in Italy, Brazil and the USA ~\citep{Vollmer2020-oj,Mellan2020-hi,Unwin2020-nc}. Such data is useful as it may help account for behavioral changes that confound the effects of NPIs. However since mobility affects transmission, is linked to the introduction of NPIs and potentially also to voluntary behavioural measures, we expect it to be a confounder.

Section \ref{sec:twostage} extends the model in \citet{Flaxman2020} to investigate further this issue of confounding, and models both NPIs and mobility jointly. This is in keeping with standard practice in regression/ANOVA: expanding a model to take into account more explanatory variables. Nonetheless, NPIs may partially affect transmission \textit{via a path through mobility}. A joint model of mobility and NPIs does not account for this. Therefore, in Section \ref{sec:causal} we take a first step in assessing causal considerations through a simple mediation analysis. We begin however by exploring the relationship between interventions and mobility.

\subsection{Interventions and Mobility}
\label{sec:intandmob}
Regressing average mobility on NPIs in a Bayesian linear model (no intercept or partial pooling) we find a correlation of over $85\%$ with a mean absolute error of $0.1\%$. Given mobility generally ranges from -1 to 1, this is a good overall fit. Figure \ref{fig:effectsize}a shows that these fits visually correspond well with changes in average mobility. One could conjecture that mobility and NPIs are lagged, but lagging NPI dates either forwards or backwards in time does not result in a better fitting model (see Figure \ref{fig:effectsize}b). Indeed Figure 1b does support a hypothesis that the timing of NPIs and changes in mobility are strongly linked. The coefficient sizes from this regression are entirely consistent with \citet{Flaxman2020} finding  that the NPI with the largest effect size is lockdown (see Figure \ref{fig:effectsize}c). This simple analysis does not model transmission, but does provide strong evidence that mobility and NPIs do not provide conflicting narratives. We note, to perform this regression as fairly as possible we used a hierarchical shrinkage prior \citep{Piironen2017-qs} that performs both shrinkage and variable selection.

 \begin{figure}[H]
	\centering
	\begin{subfigure}{.33\textwidth}
		\centering
		 \includegraphics[width = \textwidth]{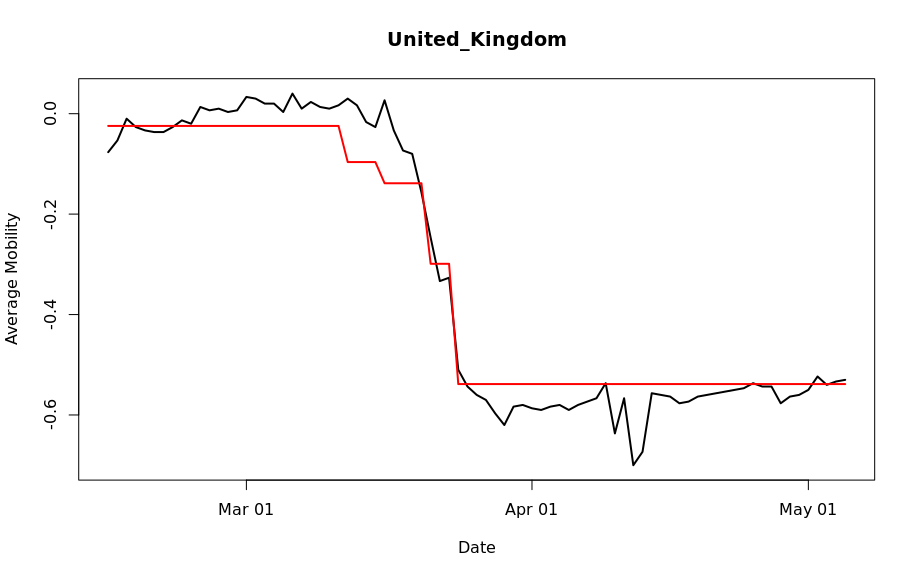}
		\caption{~}
	\end{subfigure}%
	\begin{subfigure}{.33\textwidth}
		\centering
		\includegraphics[width = \textwidth]{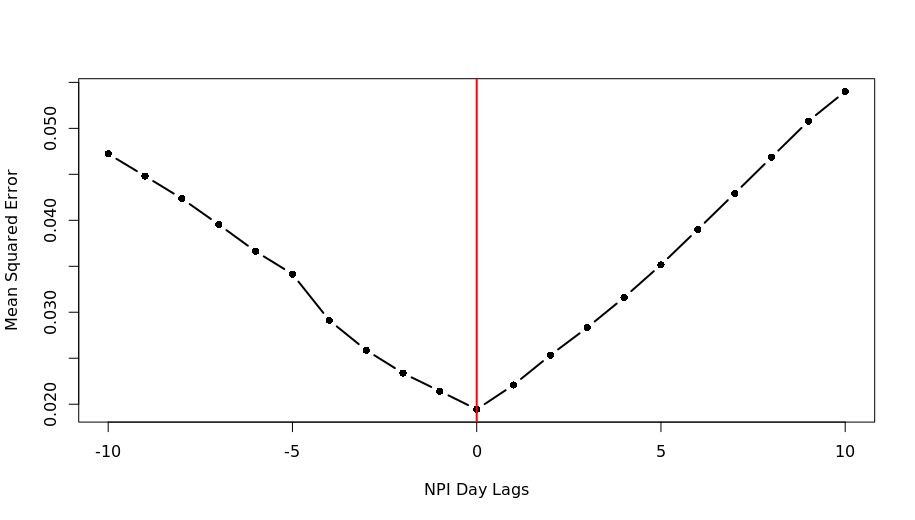}
		\caption{~} 
	\end{subfigure}%
\begin{subfigure}{.33\textwidth}
	\centering
	\includegraphics[width = \textwidth]{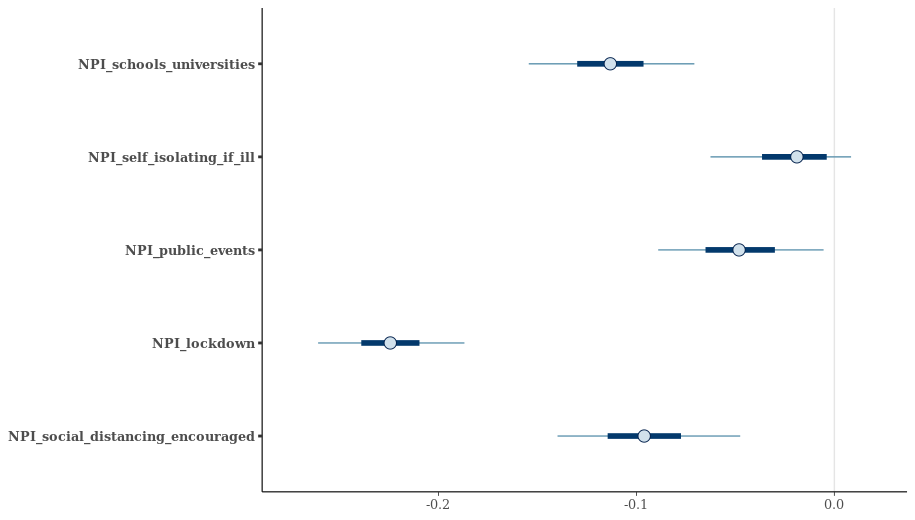}
	\caption{~} 
\end{subfigure}%
\caption{Simple regression of mobility against NPIs. Figure (a) shows the fit for the United Kingdom with mobility in black and the fit for NPIs in red. (b) Shows the effect on mean absolute error from lagging the NPIs. (c) Shows the coefficient effect sizes from the regression. Y axis are NPIs and X axis the regression effect sizes}
\label{fig:effectsize}
\end{figure}

\subsection{Controlling for Mobility}
\label{sec:twostage}

Section \ref{sec:intandmob} found a large correlation between interventions and mobility, demonstrating that mobility is a potential confounder. Here we control for this by jointly modeling NPIs and mobility. This is done using the same 11 European countries, sets of NPIs and death data as used in \citet{Flaxman2020}. 

A two-stage approach \citep{Haug2020-fq} is used, whereby $R_t$ is first estimated using a daily random walk. In the second stage this is regressed on NPIs and mobility. The random walk can in theory select any arbitrary function of $R_t$ that best describes the data without any prior information about which interventions happened when or how well they worked. Given these estimates of $R_t$ for all 11 European countries, we run a simple partial pooling model to see if interventions and/or mobility can reproduce the trends in $R_t$. The model used is a linear regression with country level intercepts (to account for variation in $R_0$), and both joint and country specific effect sizes for interventions/mobility. As with the earlier analysis we use a hierarchical shrinkage prior on the coefficients \citep{Piironen2017-qs}.

Three variations of the model are considered: NPIs only (NPI\_only), mobility only (Mobility\_only), and both NPIs and mobility (NPI+Mobility). MCMC convergence diagnostics in all cases did not indicate problems. We found the best fitting model to be NPI+Mobility. Relative to this model the expected log posterior difference ($\pm$ standard error) in WAIC of the model with only NPIs is $-5.2\pm 4$, ad $-565.6\pm 49.2$ with only mobility. Therefore, in fits to the estimated $R_t$, the model with mobility alone is substantially worse than the models with NPIs.
Controlling for mobility does not appear to significantly change the estimated effects of NPIs. As in \citet{Flaxman2020}, the largest effect size is attributed to lockdown, as seen in Figure~\ref{fig:two-stage-effectsize}. This is true with and without the inclusion of the mobility variable. This analysis could be improved using Bayesian leave-one-out cross-validation \citep{Vehtari2017-fy} to account for the time series nature of the data.

\begin{figure}[H]
	\centering
	\begin{subfigure}{.33\textwidth}
		\centering
		 \includegraphics[width = \textwidth]{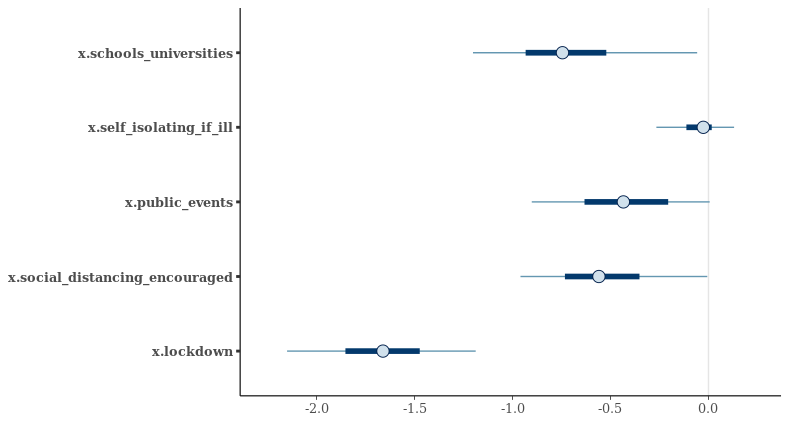}
		\caption{~}
	\end{subfigure}%
	\begin{subfigure}{.33\textwidth}
		\centering
		\includegraphics[width = \textwidth]{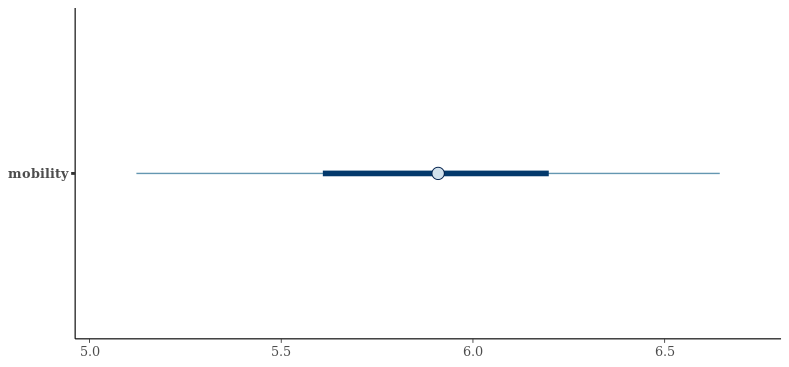}
		\caption{~} 
	\end{subfigure}%
\begin{subfigure}{.33\textwidth}
	\centering
	\includegraphics[width = \textwidth]{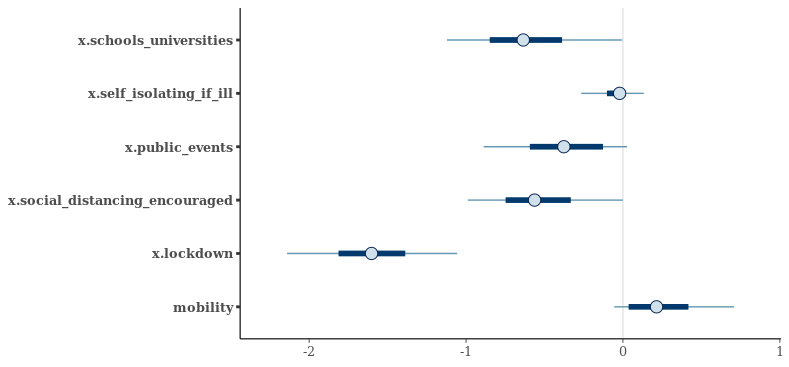}
	\caption{~} 
\end{subfigure}%
\caption{Regression of NPIs and/or mobility against nonparametric $R_t$. Figure (a) shows the fit for  NPIs only. (b) for  mobility only, (c) NPIs and mobility. Mobility only was not \emph{significantly} preferred by WAIC.  Y axis are covariates and X axis the regression effect sizes}
\label{fig:two-stage-effectsize}
\end{figure}

An advantage of the two-stage approach is that it is scalable to a large number of regions. $R_t$ can be estimated in each region in parallel using separate models. Partial pooling can still be leveraged to estimate effects in the second stage. Once $R_t$ has been estimated, any number of interesting statistical analyses can be conducted. Nonetheless the estimated $R_t$ is not entirely non-parametric; it is clearly influenced by its random walk form in the first stage. This analysis could be extended by considering a range of alternative priors for $R_t$. More importantly however, this approach has not considered causal relationships between NPIs and mobility. This is the focus of the next example. 

\subsection{Causal Mediation}
\label{sec:causal}

The effect of interventions in the previous analysis holds mobility constant. However, we intuitively expect that part of these effects occur indirectly through changeing mobility. We can hypothesise that changes in mobility are both an effect of NPIs and a cause of reductions in transmission. Causal mediation analysis provides a means to disentangle the total effect of a variable into a direct and indirect effect. The indirect effect occurs via some mediator, which in this case is hypothesised to be mobility. Further information about causal mediation can be found here \citep{Pearl2009-dw,Pearl2012-hb}.

Only lockdown is considered here because performing causal mediation with all NPIs is challenging and lockdown is consistently the NPI with the largest effect size in Section \ref{sec:intandmob}, Section \ref{sec:twostage} and in \citet{Flaxman2020}. Briefly, to perform causal mediation we 
consider two transmission models
\begin{align}
R_{t}^{(m)} &= \tilde{R}^1_{m} \exp \left(\left(\beta^1_1 + \beta^1_{1,m}\right) L_{t,m} + \epsilon^1_{t,m} \right), \label{model:Rtm_causal} \\ 
R_{t}^{(m)} &= \tilde{R}^2_{m} \exp \left(\left(\beta^2_1 + \beta^2_{1,m}\right) L_{t,m} + \left(\beta^2_2 + \beta^2_{2,m}\right) M_{t,m} + \epsilon^2_{t,m}\right), \label{model:Rtm_causal2}
\end{align}
where $L_{t,m}$ is  a binary indicator for lockdown and $M_{t,m}$ is mobility in country $m$ respectively. $\tilde{R}^i_{m}$ and $\epsilon^i_{t,m}$ are country specific parameters modeling baseline transmission and a weekly random walk respectively. All other aspects of both models are the same as in \citep{Flaxman2020}. Model \eqref{model:Rtm_causal} includes effects for lockdown, while \eqref{model:Rtm_causal2} additionally considers mobility. $\beta^1_1$ is the total effect for lockdown, while $\beta^2_1$ is the partial effect when controlling for mobility. The mediated effect is therefore $\beta^1_1 - \beta^2_1$. This quantifies the effect of lockdown \textit{via the path through mobility}. We find this mediated effect reduces $R_t$ by 18.3\% with a 95\% credible interval of $[12.2\%, 44.4\%]$. The posterior probability of the effect being greater than $0$ is 89.6\% . Individual coefficients are shown in Figure \ref{fig:mediation}.
\begin{figure}[H]
    \centering
    \begin{subfigure}{.45\textwidth}
        \centering
        \includegraphics[width = \textwidth]{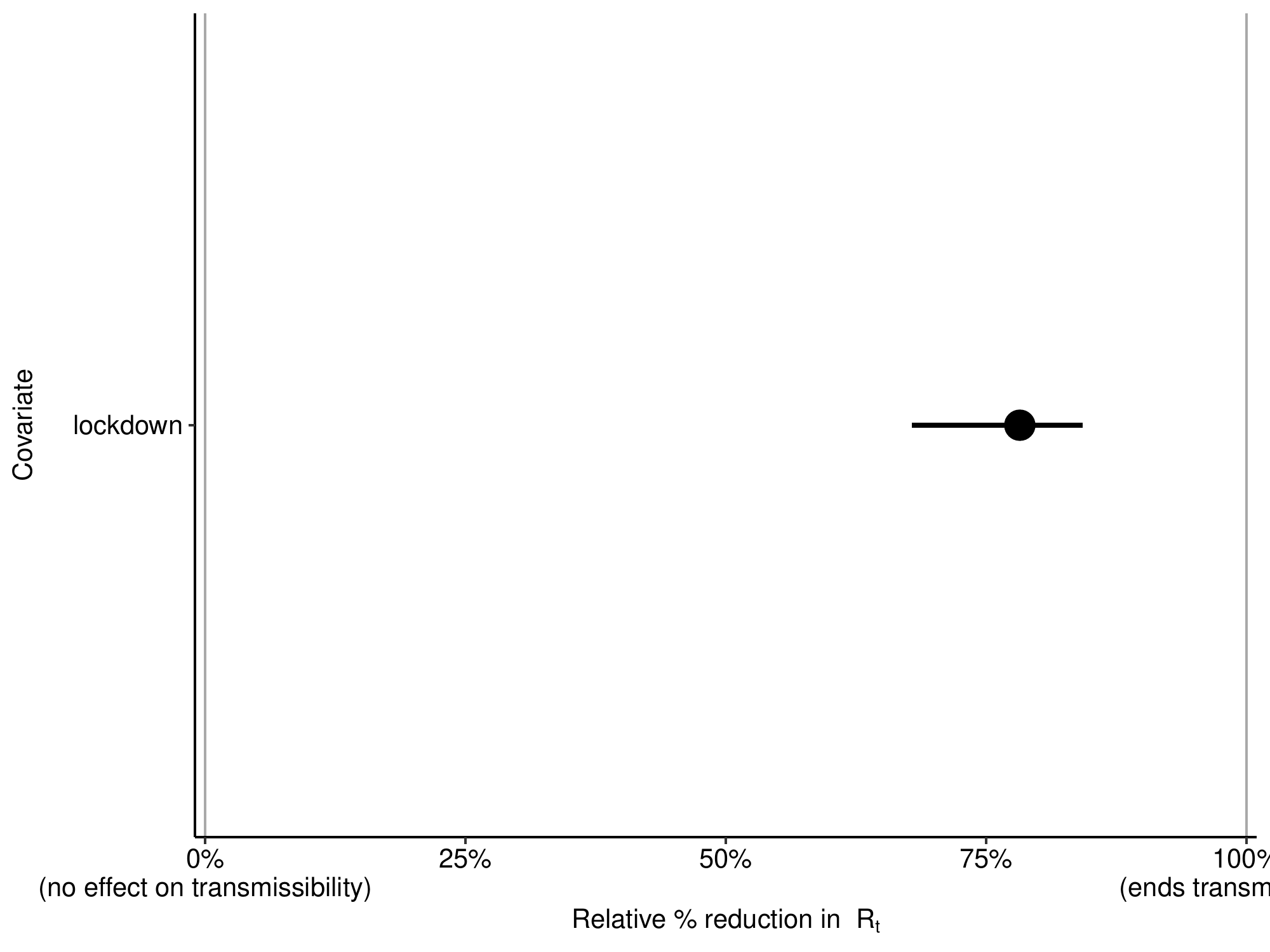}
        \caption{~}
    \end{subfigure}%
    \begin{subfigure}{.45\textwidth}
        \centering
        \includegraphics[width = \textwidth]{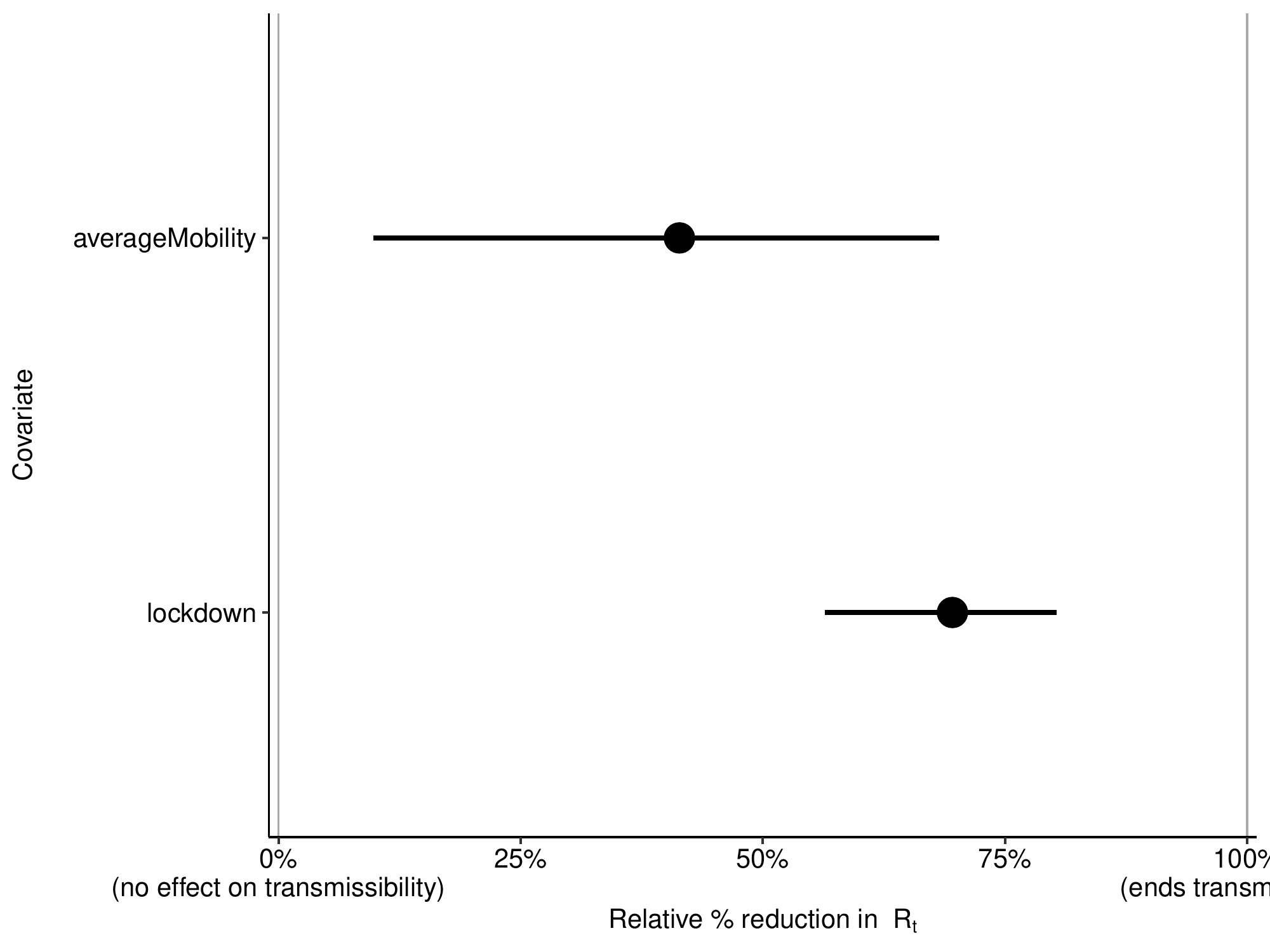}
        \caption{~} 
    \end{subfigure}%
    \caption{Mediation analysis (a) Lockdown, (b) Lockdown and Mobility}
    \label{fig:mediation}
\end{figure}

These mediation results suggest a causal link between lockdown and mobility that eventually leads to reduced transmission rates. They also suggest that the mediated effect is far less than the total effect of lockdown, suggesting lockdown will have other causal pathways. Of course, mobility is also mediated through other pathways, and a principled causal analysis is out of the scope of this article. The exclusion of other NPIs may introduce omitted variable bias. Nonetheless, this simple analysis with lockdown adds support to causal considerations.

\section{Discussion}
\label{sec:discussion}

This article has discussed a class of statistical models for epidemics such as Covid-19 which are able to capture key epidemiological mechanisms. The model has appeared in various forms for specific analyses during the Covid-19 crisis and, at the time of writing, continues to be used to inform public policy. By presenting it in a general form and discussing key modelling difficulties we hope to stimulate discussion around it. One key difficulty within the framework is dealing with confounded variables, particularly those used to explain changes in transmission during the early stages of an epidemic. The analyses in Section \ref{sec:examples} make a first step in dealing with these, and support the central finding of \citet{Flaxman2020}: that lockdown and other NPIs together served to control the first wave of the epidemic in 11 European countries. 

A number of model enhancements have not been discussed here. These include explicitly accounting for importations, and allowing for uncertainty in the generation and infection to observation distributions. The model can be fit using Stan \citep{stan_2020}, but the adaptive Hamiltonian Monte Carlo used often has difficulty converging when latent infections are modeled directly, or when multiple regions are jointly modeled. We conjecture that convergence may be improved by carefully choosing initial parameters for the sampler. Future research could explore whether alternative samplers can be developed to fit these models more pragmatically.

\printbibliography

\section{Appendix}

\subsection{Offspring Dispersion}
\label{sec:offspring}

Define the offspring distribution of any given infection to be the distribution of the random number of offspring attributable to that infection. We show that assuming the variance of these distributions are a constant proportion of the mean implies, under suitable independence assumptions, the same result for new infections $I_t$ for all time points.

Assume some ordering over infections at each period, and let $O_t^{(i)}$ denote the number of offspring of the $i$\textsuperscript{th} infection at time $t$. This can be decomposed as
\begin{equation} \label{eq:decomp}
    O_{t}^{(i)} = \sum_{s = t + 1}^{\infty} O_{ts}^{(i)},
\end{equation}
where $O_{ts}^{(i)}$ are the number of offspring of $i$ birthed at time $s$. The branching process behind Equation \eqref{eq:condexp} implies that $O_{ts}^{(i)}$ has mean $R_s g_{s-t}$. Assume that $\{O^{(i)}_{ts} : s \geq t\}$ are mutually independent and have variance which is a fixed proportion $d$ of the mean. By Equation \eqref{eq:decomp}, this implies the same variance relationship for $O_{t}^{(i)}$. In particular, if $R_s = R_t$ for $s > t$ then $O_{t}^{(i)}$ has mean $R_t$ and variance $d R_t$. New infections at time $t$ can be expressed as
\begin{equation} \label{eq:itdecomp}
    I_t = \sum_{s=1}^{t-1} \sum_{i=1}^{I_s} O_{st}^{(i)}.
\end{equation}
Assume that all $O_{st}^{(i)}$ appearing in Equation \eqref{eq:itdecomp} are mutually independent conditional on everything occurring up to time $t-1$, the result clearly follows by taking the variance of both sides of Equation \ref{eq:itdecomp} given $R_t$ and $I_{v:t-1}$.

\subsection{Population Adjustment}
\label{sec:popadjustapp}

Here we motivate Equation \eqref{eq:condexprevised}, which is used to adjust transmission rates for the size of the infectable population. The most obvious starting point for such an adjustment would be to let 
\begin{equation} \label{eq:condexpadj}
\E[I_{t} \vert R_t, I_{v:t-1}] = \left(\frac{S_0 - I_{t-1}}{S_0} \right) R_{u,t} L_t,
\end{equation}
where $R_{u_t}$ is defined as in Section \ref{sec:popadjust}. This is similar in form to a \textit{discrete logistic growth model}. Such models are well known as examples of simple models that exhibit chaotic dynamics \citep{May1976}. In particular, it is possible that the expected value on the left hand side exceeds the remaining susceptible population. Intuitively, this issue occurs because multiple infections can occur simultaneously in the discrete model. We therefore propose solving this by using a population adjustment motivated by the solution to a continuous time model whose intensity is a simplification of Equation \eqref{eq:intensity}.

Suppose we observe $I_{v:t-1}$ and current transmission $R_t$. We evolve infections from time $t-1$ to $t$ continuously, and hence avoid overshooting. Define a continuous time counting $\tilde{I}(s)$ process starting at time $t-1$ by the intensity
\begin{equation} \label{eq:modifiedintensity}
    \tilde{\lambda}(s) = \left(\frac{S_0 - \tilde{I}(s)}{S_0} \right) R_{u,t} L_t,
\end{equation}
for $s \geq t-1$, and with initial condition $\tilde{I}(t-1) = I_{t-1}$. Supplementary \ref{sec:popadjustproof} shows that 
\begin{equation} \label{eq:popadjustsupp}
    \E[\tilde{I}(t)] = I_{t-1} + (S_0 - I_{t-1}) \left(1 - \exp\left(-\frac{R_{u,t}L_t}{S_0}\right)\right),
\end{equation}
which is the motivation for Equation \eqref{eq:condexprevised}.

\newpage

\section{Online Supplement}

\subsection{Proof of Equation \eqref{eq:popadjustsupp}}
\label{sec:popadjustproof}

Without loss of generality, we prove the result for time $t = 1$. The argument remains the same for all $t > 1$.

From \cite[Lemma 5.5]{Thompson1985}, we have
$$
\E[\tilde{I}(s)] = \tilde{I}(0) + \int^s_0 \E[\tilde{\lambda}(l)] dl \text{ for } s \geq 0. 
$$
The following lemma derives an expression for the the expected intensity on the right hand side.

\begin{lemma} \label{lemma:intensity}
The expected intensity takes the form 
\begin{equation*}
    E[\tilde{\lambda}(s)] = \tilde{\lambda}(0)\exp\left( - \frac{R_{u,1}L_1}{S_0}s\right),
\end{equation*}
for all $s \geq 0$.
\end{lemma}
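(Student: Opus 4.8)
The plan is to convert the identity quoted from \cite[Lemma 5.5]{Thompson1985} into a closed scalar integral equation for $f(s) := \E[\tilde\lambda(s)]$ and then solve it. The key observation is that, conditional on the information fixing $R_{u,1}$ and $L_1 = \sum_{s<1} I_s g_{1-s}$ (namely $R_1$ and the seeded infections $I_{v:0}$), the intensity in \eqref{eq:modifiedintensity} is an \emph{affine} function of the random variable $\tilde I(s)$: writing $c := R_{u,1} L_1 / S_0$ we have $\tilde\lambda(s) = c\,(S_0 - \tilde I(s))$. Taking expectations and substituting $\E[\tilde I(s)] = \tilde I(0) + \int_0^s \E[\tilde\lambda(l)]\,dl$ from the cited lemma yields
\begin{equation*}
f(s) \;=\; c\bigl(S_0 - \tilde I(0)\bigr) - c\int_0^s f(l)\,dl \;=\; \tilde\lambda(0) - c\int_0^s f(l)\,dl,
\end{equation*}
where the last equality uses $\tilde I(0) = I_0$, so that $c\bigl(S_0 - \tilde I(0)\bigr) = \tilde\lambda(0)$.

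Next I would solve this Volterra equation. Since the intensity vanishes as soon as $\tilde I(s)$ reaches $S_0$, the process never overshoots and one has the a priori bound $0 \le \tilde I(s) \le S_0$ for all $s$ (assuming $I_0 \le S_0$); hence $f$ is bounded, $s \mapsto \int_0^s f(l)\,dl$ is continuous, and so $f$ itself is continuous. Differentiating the integral equation then gives the linear ODE $f'(s) = -c\,f(s)$ with initial condition $f(0) = \tilde\lambda(0)$, whose unique solution is $f(s) = \tilde\lambda(0)\exp(-cs)$, i.e.\ the claimed expression. (Equivalently, a Picard-iteration argument applied directly to the integral equation gives existence and uniqueness without invoking differentiability; and in the degenerate case $c = 0$, i.e.\ no secondary transmission, the formula trivially reduces to $f \equiv \tilde\lambda(0) = 0$.)

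The main obstacle is not the algebra but the measure-theoretic bookkeeping needed to make each step rigorous: checking that the counting process with the state-dependent intensity \eqref{eq:modifiedintensity} is well defined and satisfies the hypotheses of \cite[Lemma 5.5]{Thompson1985}; that $\tilde I(s)$ is integrable for each fixed $s$ (supplied by the bound $\tilde I(s) \le S_0$); and that Fubini's theorem licenses the interchange of $\E$ with $\int_0^s$ used to pass from the cited identity to the integral equation for $f$. I would also state explicitly at the outset that all expectations are taken conditional on $R_1$ and $I_{v:0}$, which is precisely what makes $c$ a genuine constant and the intensity affine rather than merely measurable in $\tilde I(s)$.
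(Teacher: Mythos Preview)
Your argument is correct and reaches the same destination as the paper---the linear ODE $f'(s)=-cf(s)$ with $f(0)=\tilde\lambda(0)$---but you get there by a cleaner route. The paper defines $h(s):=\E[\tilde\lambda(s)]$, expands $h(s+\Delta)$ using $\E[\tilde I(s+\Delta)\mid\tilde\lambda(s)]=\E[\tilde I(s)\mid\tilde\lambda(s)]+\tilde\lambda(s)\Delta+\mathcal{O}(\Delta)$, forms a difference quotient, and passes to the limit $\Delta\to 0$ to obtain $h'/h=-c$. You instead exploit the affine form $\tilde\lambda(s)=c(S_0-\tilde I(s))$ to push the expectation through and substitute the Thompson identity directly, obtaining the Volterra equation $f(s)=\tilde\lambda(0)-c\int_0^s f(l)\,dl$ before differentiating. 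Your path avoids the informal $\mathcal{O}(\Delta)$ bookkeeping and makes the regularity of $f$ transparent (boundedness from $0\le\tilde I(s)\le S_0$ gives continuity of the integral, hence of $f$); the paper's finite-difference derivation is more heuristic on exactly the points you flag as needing care. Either way the analysis is the same once the ODE is in hand.
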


\begin{proof}[Proof of Lemma \ref{lemma:intensity}]
Fix $s \geq 0$, some small $\Delta > 0$ and let $h(s) := \E[\tilde{\lambda}(s)]$. We have from Equation \eqref{eq:modifiedintensity} that
\begin{equation} \label{eq:htdelta}
    h(s+\Delta) = \left(\frac{S_0 - \E[\tilde{I}(s+ \Delta)]}{S_0} \right) R_{u,1} L_1.
\end{equation}
We can write
\begin{equation*}
    \E[\tilde{I}(s + \Delta) \vert \tilde{\lambda}(s)] =  \E[\tilde{I}(s) \vert \tilde{\lambda}(s)] + \tilde{\lambda}(s) \Delta + \mathcal{O}(\Delta),
\end{equation*}
and taking expectations on both sides,
\begin{equation*}
\E[\tilde{I}(s+ \Delta)] = \E[\tilde{I}(s)] + h(s)\Delta + \mathcal{O}(\Delta).
\end{equation*}
Substituting this into \eqref{eq:htdelta} and rearranging gives
\begin{equation*}
    \begin{split}
        h(s+\Delta)  &= \left(\frac{S_0 - \E[\tilde{I}(s+ \Delta)]}{S_0} \right) R_{u,1} L_1 - \frac{R_{u,1}L_1}{S_0}\left(h(s) \Delta + \mathcal{O}(\Delta)\right), \\
        & = h(s) - \frac{R_{u,1} L_1}{S_0}\left(h(s) \Delta + \mathcal{O}(\Delta)\right).
    \end{split}
\end{equation*}
Rearranging gives
\begin{equation*}
    \frac{h(s+\Delta) - h(s)}{\Delta} = -\frac{R_{u,1}L_1}{S_0}\left(h(s)  + \frac{\mathcal{O}(\Delta)}{\Delta}\right).
\end{equation*}
Taking the limit as $\Delta \to 0$ and rearranging gives the differential equation
\begin{equation*}
    \frac{h'(s)}{h(s)} = -\frac{R_{u,1}L_1}{S_0}.
\end{equation*}
Integrating both sides gives
\begin{equation*}
    \log(h(s)) = -\frac{R_{u,1}L_1}{S_0}s + C.
\end{equation*}
Using that $h(0) = \tilde{\lambda}(0)$ gives the constant $C = \log(\tilde{\lambda}(0))$. Plugging in yields the required result.
\end{proof}

Hence, 
\begin{equation*}
    \begin{split}
        \E[\tilde{I}(s)] & = I_0 + \tilde{\lambda}(0) \int_0^s \exp\left( - \frac{R_{u,1}L_1}{S_0}l\right) dl\\
        & = I_0 + \tilde{\lambda}(0)\frac{S_0}{R_{u,1}L_1} \left( 1 - \exp\left( - \frac{R_{u,1}L_1}{S_0}s\right)\right) \\
        &= I_0 + (S_0 - \tilde{I}(s))\left( 1 - \exp\left( - \frac{R_{u,1}L_1}{S_0}s\right)\right).
    \end{split}
\end{equation*}
Letting $s = 1$ gives the required result.

\end{document}